\newtheorem{theorem}{Theorem}[section]
\theoremstyle{definition}
\theoremstyle{remark}
\theoremstyle{plain}
\newtheorem{lemma}[theorem]{Lemma}
\newtheorem{corollary}[theorem]{Corollary}
\def\denseformat{
\setlength{\textheight}{9in}
\setlength{\textwidth}{6.9in}
\setlength{\evensidemargin}{-0.2in}
\setlength{\oddsidemargin}{-0.2in}
\setlength{\headsep}{10pt}
\setlength{\topmargin}{-0.3in}
\setlength{\columnsep}{0.375in}
\setlength{\itemsep}{0pt}
}
\def\MathN{\hbox{\rm I\kern-2pt I\kern-3.1pt N}}
\def\Expect{\hbox{\rm I\kern-2pt I\kern-3.1pt E}}
\begin{document}
\title{On the Locality of Some NP-Complete Problems}
\author{
Leonid Barenboim\thanks{Department of Computer Science,
        Ben-Gurion University of the Negev,
        POB 653, Beer-Sheva 84105, Israel.
        E-mail: {\tt leonidba@cs.bgu.ac.il}
        \newline Supported by the Adams Fellowship
Program of the Israel Academy of Sciences and Humanities.}}
\maketitle
\begin{abstract}
We consider the distributed message-passing ${\cal LOCAL}$ model. In this model a communication network is represented by a graph where vertices host processors, and communication is performed over the edges. Computation proceeds in synchronous rounds. The running time of an algorithm is the number of rounds from the beginning until all vertices terminate. Local computation is free. An algorithm is called {\em local} if it terminates within a constant number of rounds. The question of what problems can be computed locally was raised by Naor and Stockmayer  \cite{NS93} in their seminal paper in STOC'93. Since then the quest for problems with local algorithms, and for problems that cannot be computed locally, has become a central research direction in the field of distributed algorithms \cite{KMW04,KMW10,LOW08,PR01}. 

We devise the first local algorithm for an {\em NP-complete} problem. Specifically, our randomized algorithm computes, with high probability, an $O(n^{1/2 + \epsilon} \cdot \chi)$-coloring within $O(1)$ rounds, where $\epsilon > 0$ is an arbitrarily small constant, and $\chi$ is the chromatic number of the input graph. (This problem was shown to be NP-complete in \cite{Z07}.) On our way to this result we devise a constant-time algorithm for computing $(O(1), O(n^{1/2 + \epsilon}))$-network-decompositions. Network-decompositions were introduced by Awerbuch et al. \cite{AGLP89}, and are very useful for solving various distributed problems. The best previously-known algorithm for network-decomposition has a polylogarithmic running time (but is applicable for a wider range of parameters) \cite{LS93}. We also devise a $\Delta^{1 + \epsilon}$-coloring algorithm for graphs with sufficiently large maximum degree $\Delta$ that runs within $O(1)$ rounds. It improves the best previously-known result for this family of graphs, which is $O(\log^* n)$ \cite{SW10}.
\end{abstract}


\section{Introduction}
{1.1 \bf The Model\\}
We consider the distributed message-passing model. This model, widely known as the ${\cal LOCAL}$ model, was formalized by Linial in his seminal paper in FOCS'87 \cite{L87}. In this model a communication network is represented by an $n$-vertex graph $G = (V,E)$ of maximum degree $\Delta = \Delta(G)$.  The vertices of the graph host processors, and communication is performed over the edges. Each vertex has a distinct identity number (henceforth, ID) of size $O(\log n)$ bits. The model is synchronous, meaning that computation proceeds in discrete rounds. In each round vertices are allowed to perform unbounded local computation, and send messages to their neighbors that arrive before the beginning of the next round. The input for a distributed algorithm is the underlying network. However, initially each vertex knows only the number of vertices $n$, and the IDs of its neighbors. Within $r$ rounds, a vertex can learn the topology of its $r$-hop-neighborhood. For a given problem on graphs, a vertex has to compute only its part in the output. For example, for vertex coloring problems, each vertex has to compute only its color. However, the union of outputs of all vertices must constitute a correct solution. The running time of a distributed algorithm is the number of rounds from the beginning until the last vertex terminates. Local computation is free, and is not taken into account. This is motivated by the study of the ability of each vertex to arrive to a solution based on coordination only with close vertices.\\
{\bf 1.2 Problems and Results\\}
A legal vertex coloring is an assignment of colors to vertices, such that each pair of neighbors are assigned distinct colors.
Vertex coloring problems are among the most fundamental and extensively studied problems in the field of distributed algorithms. Many variations have been studied. The most common variation is the $(\Delta + 1)$-coloring problem. The goal of this problem is computing a legal vertex coloring using at most $\Delta + 1$ colors. This problem has a very simple greedy solution in the sequential setting. Specifically, each vertex performs a color selection based on its $1$-hop neighborhood. (The sequential time of the algorithm is linear.) However, in the distributed setting it becomes much more complicated. It is impossible to compute a solution based on an $O(1)$-hop-neighborhood \cite{L87}. The best currently-known deterministic distributed algorithms require $O(\Delta + \log^* n)$ \cite{BE09,K09}, and $2^{O(\sqrt{\log n})}$ \cite{PS95} rounds. The best currently-known randomized algorithm requires $O(\sqrt{\log n} + \log \Delta)$ rounds \cite{SW10}. Moreover, Linial \cite{L87} proved that any distributed algorithm requires $\Omega(\log^* n)$ rounds for computing $(\Delta + 1)$-, and even $\Delta^2$-, coloring. (On the other hand, $O(\Delta^2)$-coloring is currently known to have a distributed algorithm that requires $O(\log^* n)$ rounds \cite{L87}.) On special graph families it is often possible to employ fewer than $\Delta + 1$ colors. However, such algorithms provably cannot terminate within a small number of rounds. In particular, for graphs with arboricity\footnote[1]{The {\em arboricity} is the minimum number of forests that cover the graph edges.} $a$, coloring a graph with $O(a)$ colors can be performed in $O(a^{\epsilon} \cdot \log n)$ rounds \cite{BE09}, for an arbitrarily small constant $\epsilon > 0$. However, any algorithm for this task requires $\Omega(\log n/ \log a)$ rounds \cite{BE08}.


In the current paper we focus on problems which are hard in the sequential setting, as opposed to the problems mentioned above.
Nevertheless, we show that NP-Complete vertex-coloring problems can be solved within $O(1)$ rounds in the distributed  ${\cal LOCAL}$ setting. We devise a randomized algorithm that computes an $O(n^{1/2 + \epsilon} \chi)$-coloring, within a constant number of rounds, with high probability. ($\chi$ is the chromatic number of the input graph, and $\epsilon >0$ is an arbitrarily small constant.) Computing $O(n^{1 - \epsilon} \chi)$-coloring (and, in particular, $O(n^{1/2 + \epsilon} \chi)$-coloring) is known to be  NP-complete \cite{Z07}. To the best of our knowledge, prior to our work NP-complete problems could be solved only within $O(Diam(G))$ rounds. (In the ${\cal LOCAL}$ model every computable problem can be solved within $O(Diam(G))$ rounds, since all vertices can learn the topology of the entire input graph.)

The question of what problems can be solved within a constant number of rounds is one of the most fundamental questions in the field of distributed algorithms. It was raised around twenty years ago in the seminal paper of Naor and Stockmayer, titled "What can be computed locally?" \cite{NS93}. In this paper, an algorithm that requires $O(1)$ rounds is called {\em a local algorithm}. Despite a very intensive research in this direction that was conducted in the last twenty years, few problems with local algorithms on general graphs are known. (On the other hand, on constant-diameter graphs, any problem can be solved locally. Therefore, in the current setting, this question is meaningful only with respect to families of graphs with superconstant diameter.) Specifically, there are known local algorithms for computing weak-colorings \cite{NS93}, $\Delta$-forests-decomposition \cite{PR01}, edge-defective-colorings \cite{K09}, and dominating-set approximation \cite{KW03,KMW06,LOW08}. (All these problems have simple sequential solutions. In particular, the dominating-set approximation problems for which local algorithms are known can be solved sequentially in polynomial time.) On the other hand, many problems provably cannot be computed locally. In particular, minimum vertex cover, minimum dominating set, maximum independent set, maximum matching, maximal independent set, and maximal matching require $\Omega(\sqrt{\log n})$ rounds \cite{KMW04,KMW10}. (The first three problems are NP-complete. The last three problems have polynomial sequential solutions. In particular, the last two problems have very simple greedy sequential algorithms.) Also, it is known that $\Delta^k$-coloring, for any constant $k$ requires $\Omega(\log^* n)$ rounds. Thus, discovering non-trivial (and, especially, hard) problems that can be computed within $O(1)$ rounds is of significant interest.

In our $O(n^{1/2 + \epsilon} \chi)$-coloring algorithm the vertices perform NP-complete local computations in each round. Therefore, this algorithm has mainly theoretical interest. However, we stress that unless P=NP, it is impossible to solve NP-complete problems by using polynomial local computation per round. Otherwise, we could simulate the network using a single processor. Consequently, we could solve an NP-complete problem sequentially in polynomial time. The sequential running time would be $O(n)$ times the maximal local running time of a processor.  

In addition to $O(n^{1/2 + \epsilon} \chi)$-coloring we devise local algorithms for several problems. 
In these algorithms vertices perform polynomial local computations, and, therefore, may be useful in practice. Specifically, we devise algorithms for $(O(1),O(n^{1/2 + \epsilon}))$-network-decomposition, and for $\Delta^{1+\epsilon}$-coloring graphs with large degree. The best previously known algorithm for network-decomposition requires $O(\log^2 n)$ rounds, but it employs different parameters for the decomposition \cite{LS93}. The best previously known $\Delta^{1 + \epsilon}$-coloring algorithm requires $O(\log^* n)$ time \cite{SW10}. We elaborate on these problems in Section 1.4.\\
{\bf 1.3 The Difficulty of Solving NP-Complete Problems Locally}\\
Discovering NP-complete problems that can be solved locally in the distributed setting is interesting for the following reasons. Although the sequential setting and the distributed  ${\cal LOCAL}$ setting are considerably different each from another, it is plausible that many 
NP-complete problems are also difficult to compute in the distributed setting. 
 Despite that each vertex has an unbounded local computational power, the vertices have limited knowledge about the input graph. 
On the other hand, NP-complete problems usually define global constraints, which makes the computation difficult in the occasion of partial input knowledge. Consider, for example, the maximum clique problem. Consider a graph $G$ in which two cliques $K_4$ and $K_3$ are connected by a path of length $\Theta(n)$, and, therefore, are at distance $\Theta(n)$ each from another. All vertices of $K_4$ must decide that they belong to the maximum clique. However, if the vertices of $K_4$ aware only of their $o(n)$-hop-neighborhood, then they cannot distinguish $G$ from another graph $G'$ that connects $K_4$ and $K_5$. Therefore, they cannot always arrive to a correct solution if the number of rounds is $o(n)$. Similar phenomenon occurs in additional problems. 

Another example of 
a difficulty that NP-complete problems arise in the distributed setting can be found in the area of local decision and verification. (See, e.g., \cite{FKP11}). In this area the vertices are required to verify locally the correctness of the output, and at least one vertex needs to react in case of an incorrect output. While some simple sequential problems, such as maximal independent set and maximal matching are locally verifiable, NP-complete problems are more difficult for distributed verification because of their global constraints.\\
{\bf 1.4 Our Techniques\\}
Our main technical contribution is devising {\em network-decomposition} algorithms that require $O(1)$ rounds. Roughly speaking, a network decomposition is a partition of the vertices into clusters of bounded diameter, such that the supergraph formed by contracting clusters into single vertices has bounded chromatic number. (See Section \ref{sc:prelmn} for a formal definition.) Network-decompositions are among the most useful structures in the field of distributed graph algorithms. Once an appropriate network decomposition is computed, it becomes possible to solve efficiently a variety of problems. These problems include vertex colorings, edge colorings, maximal independent set, maximal matching, and additional problems. The best currently-known deterministic $(\Delta + 1)$-coloring algorithms and maximal independent set algorithms employ network-decompositions \cite{PS95}. Since the best currently-known network-decomposition algorithms require quite a large number of rounds (superlogarithmic for deterministic algorithms, and polylogarithmic for randomized ones) the quest for efficient network-decomposition algorithms is of great interest.

We devise a novel partitioning technique that allows computing network-decomposition with cluster diameter $O(1)$ and supergraph chromatic number $O(n^{1/2 + \epsilon})$. Using a randomized algorithm we partition the vertex set of the input graph into subsets. Each subset has its own helpful properties that allow computing the network-decomposition efficiently. Specifically, one of the subsets contains a small dominating set, with high probability. We show that small dominating sets are very useful for computing network-decompositions. Another subset in the partition has bounded maximum degree, with high probability. This is very useful as well, since we can compute a $\Delta^{1 + \epsilon}$-coloring in constant number of rounds on such graphs. Such a coloring is, in particular, a network-decomposition. Once we compute network-decompositions of the subsets, we merge the results to achieve a unified network decomposition of the input graph.\\
{\bf 1.5 Related Work}\\
Cole and Vishkin \cite{CV86} and Goldberg and Plotkin \cite{GP87} devised deterministic $3$-coloring algorithms for paths, cycles and trees that require $O(\log^* n)$ rounds. Luby \cite{L86} and Alon, Babai and Itai \cite{ABI86} devised randomized algorithms for maximal independent set that require $O(\log n)$ rounds. Averbuch, Goldberg, Luby, and Plotkin \cite{AGLP89} devised a deterministic network-decomposition algorithm that requires $2^{O(\sqrt{\log n \log \log n})}$ rounds. It was later improved by Panconesi and Srinivasan \cite{PS95}, who achieved running time of $2^{O(\sqrt{\log n})}$ rounds.
Schneider and Wattenhofer \cite{SW11} devised a randomized coloring algorithm that produces, for a wide range of graphs, a $(1- 1/O(\chi))\Delta$-coloring within $O(\log \chi + \log^* n)$ time. 

To the best of our knowledge, the hardest problem that could be solved locally prior to our work is computing a constant approximation of minimum dominating sets on planar graphs \cite{LOW08}. Although computing minimum dominating sets on planar graphs is NP-complete, the constant approximation for this problem presented in \cite{LOW08} can be computed in polynomial time in the sequential setting.

\section{Preliminaries} \label{sc:prelmn}
\noindent Unless the base value is specified, all logarithms in this paper are to base 2.\\
The graph $G'=(V',E')$ is a  {\em subgraph}  of $G=(V,E)$, denoted $G' \subseteq G$, if $V' \subseteq V$ and $E' \subseteq E$. For a subset $V' \subseteq V$, the graph $G(V')$ denotes the subgraph of $G$ induced by $V'$.  The {\em degree} of a vertex $v$ in a graph $G = (V,E)$, denoted {\em $deg_G(v)$}, is the number of edges incident to $v$. The {\em distance} between a pair of vertices $u,v \in V$, denoted $dist_G(u,v)$, is the length of the shortest path between $u$ and $v$ in $G$. A vertex $u$ such that $(u,v) \in E$ is called a {\em neighbor} of $v$ in $G$. The {\em neighborhood} of $v$ in $G$, denoted $\Gamma_G(v)$, is the set of neighbors of $v$ in $G$. If the graph $G$ can be understood from context, then we omit the underscript $_G$.  The {\em $r$-hop-neighborhood} of $v$ in $G$ is $\Gamma_G^r(v) = \{u \ | \ dist_G(u,v) \leq r\}$. If the graph $G$ can be understood from context, we use the shortcut $\Gamma_r(v)$ for $\Gamma_G^r(v)$.  The maximum degree of a vertex in $G$, denoted $\Delta(G)$, is defined by $\Delta(G) = \max_{v  \in V} deg(v) $. 
The {\em diameter} of $G$ is the maximum distance between a pair of vertices in $G$.\\
A {\em dominating set} $U \subseteq V$ satisfies that for each $v \in V$, either $v \in U$, or there is a neighbor of $v$ in $U$.
The {\em chromatic number} $\chi(G)$ of a graph $G$ is the minimum number of colors that can be used in a legal coloring of the vertices of $G$. The {\em Minimum-Coloring problem} is the problem of computing a legal coloring of the vertices  of $G$ using $\chi(G)$ colors.
For a graph $G = (V,E)$, a function $f: V \rightarrow \MathN$ is called a {\em label assignment}.
For a graph $G$ with a label assignment $f$, a connected component of vertices with the same label forms a {\em cluster}. More formally, a {\em cluster} is a connected component $U \subseteq V$, such that for each $u,v \in U$, it holds that $f(u) = f(v)$, and for each $u \in U, v \in \Gamma(u) \setminus U$, it holds that $f(u) \neq f(v)$.
A {\em $(d,c)$-network-decomposition} of a graph $G$ is an assignment of labels from the set $\{1,2,...,c\}$ to vertices of $V$, such that each cluster has diameter at most $d$.\\
An algorithm succeeds {\em with high probability} if it succeeds with probability $1 - 1/n^k$, for an arbitrarily large constant $k \geq 1$.
\section{Approximating Minimum-Coloring using Network Decompositions}
In this section we show how to approximate Minimum-Coloring on a graph with a given $(d,c)$-network-decomposition. First, we provide a high-level description of the algorithm. Suppose that we are given a graph $G$, and a label assignment $f: V \rightarrow \{1,2,...,c\}$, such that each cluster has diameter at most $d$. We $c$-approximate Minimum-Coloring in the following way. First, for each cluster $U \subseteq V$, we compute a Minimum-Coloring $\varphi_U: U \rightarrow \{1,2,...,\chi(G(U))\}$. Next, we compute a new color $\varphi(v)$ for each $v \in V$. Let $W$ be the cluster of $v$. (Notice that by definition, each vertex belongs to exactly one cluster.) We set $\varphi(v) = \varphi_W(v) \cdot c + f(v) - 1$. Intuitively, the color $\varphi(w)$ can be seen as the ordered pair $\langle \varphi_W(v), f(v) \rangle$. The coloring $\varphi$ is returned by the algorithm. In the sequel we show that $\varphi$ is a $c$-approximation of Minimum-Coloring of $G$.

Next, we provide a detailed description of a distributed algorithm that employs the high-level idea described above. The algorithm is called {\em Procedure Approximate}. Similarly to all distributed algorithms that we will describe, it defines the behavior of each vertex $v \in V$ in each round. Procedure Approximate accepts as input the label $f(v)$ of $v$, and the number of labels $c$. In the first stage of the procedure, $v$ collects the entire topology of the cluster $W$ that $v$ belongs to. It is widely known (see, e.g., \cite{AGLP89,PS95}) that collecting the topology of an $r$-hop-neighborhood of a vertex $v$ can be performed in $r$ rounds. (We elaborate on this in the Appendix.)
\def\APPg{
In this section we describe how a vertex can collect the topology of its $r$-hop-neighborhood within $r$ rounds. We show this using an inductive argument on $r$.\\ {\bf Base (r = 1):} In the first round, each vertex $v$ knows its neighborhood $\Gamma(v)$.\\ {\bf Step:} Suppose that after $r-1$ rounds, for a positive integer $r$, each vertex knows its $(r-1)$-neighborhood $\Gamma_{r-1}(v)$. Moreover, suppose that it knows for each $u \in \Gamma_{r-1}(v)$ to which vertices in $V$ the vertex $u$ is connected. Then, in round $r$ each vertex sends this information to all its neighbors. Consequently, a vertex $v$ receives the topology of $\Gamma_{r-1}(u)$ from each of its neighbors $u$. Thus, after round $r$, the vertex $v$ knows $\Gamma_{r}(v)$. Moreover, for each vertex $u \in \Gamma_{r}(v)$ it knows the neighbors of $u$ in $V$.} Therefore, each vertex can collect the topology of its $(d + 1)$-neighborhood $\Gamma_{d+1}(v)$ within $(d + 1)$-rounds. Since the diameter of the cluster $W$ of $v$ is at most $d$, it holds that $W \subseteq \Gamma_{d+1}(v)$. Hence $v$ learns the topology of $W$ within $(d+1)$-rounds.

In the second stage, Procedure Approximate computes a Minimum-Coloring of the cluster $W$ of $v$. To this end, it employs a deterministic algorithm that performs exhaustive search locally. Specifically, for $i = 1,2,...,$ the algorithm goes over all possible (either legal or illegal) colorings of $G(W)$ with $i$ colors. For each coloring it checks whether it is legal or not, and terminates in the first time a legal coloring is found. Observe that this technique guarantees that all vertices that belong to the same cluster $W$ compute the same coloring. Indeed, all vertices $w \in W$ have learnt the entire topology of $W$, and perform an exhaustive search on $G(W)$ locally. Since all the vertices perform exactly the same deterministic algorithm that runs on the input $G(W)$, the output is identical for all vertices in $W$. Denote by  $\varphi_W$ the coloring returned by the exhaustive search. Each vertex $v$ sets $\varphi(v) = \varphi_W(v) \cdot c + f(v) - 1$, and terminates. This completes the description of Procedure Approximate. Next we analyze its correctness and running time.

\begin{lemma} \label{dtime}
Procedure Approximate invoked on a graph $G$ with a $(d,c)$-network-decomposition requires $(d + 1)$-rounds.
\end{lemma}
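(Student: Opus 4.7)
The proof is essentially a round-by-round accounting of the two stages of Procedure Approximate, with no surprises. My plan is the following.

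First, I will handle Stage 1 (topology collection). The excerpt already establishes, by induction on $r$, that any vertex can learn $\Gamma_{d+1}(v)$, together with the edges among vertices of that neighborhood and the labels $f(u)$ for $u \in \Gamma_{d+1}(v)$, within $d+1$ rounds. I would then argue that this is enough information for $v$ to reconstruct the cluster $W$ containing it: since the diameter of $W$ is at most $d$, every $u \in W$ satisfies $\mathit{dist}_G(u,v) \le d$, so $W \subseteq \Gamma_d(v) \subseteq \Gamma_{d+1}(v)$; and to certify that a vertex $u \in W$ has no further cluster-neighbors, $v$ must also see the labels of $\Gamma(u)$, all of which lie in $\Gamma_{d+1}(v)$. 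Hence after round $d+1$, every vertex $v$ knows $G(W)$ exactly, including the identities and labels needed.

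Second, I will handle Stage 2 (local coloring and relabeling). The exhaustive search that produces $\varphi_W$ is performed on the locally stored subgraph $G(W)$, so it consumes zero communication rounds. Because every $w \in W$ runs the identical deterministic enumeration procedure on the identical input $G(W)$, they all output the same coloring $\varphi_W$, so $\varphi_W(v)$ is well defined. Finally, computing $\varphi(v) = \varphi_W(v)\cdot c + f(v) - 1$ uses only quantities already local to $v$, and therefore also costs no additional rounds.

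Adding up the two stages yields a total of $d+1$ rounds, which is exactly the claim. I do not expect a real obstacle here: the only subtlety worth stating explicitly is why $d+1$, rather than $d$, rounds suffice and are what is actually being used — namely, that one extra hop beyond the diameter is needed to see the labels of the cluster's boundary and thereby reconstruct $G(W)$ correctly. Everything else is routine bookkeeping of the two-stage procedure.
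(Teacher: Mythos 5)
Your proof is correct and follows the same route as the paper: the only communication cost is the collection of the cluster topology, which takes $d+1$ rounds, and the coloring and relabeling stages are purely local. Your added remark on why the extra hop beyond the cluster diameter is needed (to see boundary labels and delimit $G(W)$) is a reasonable elaboration of the paper's one-line argument, not a departure from it.
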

\begin{proof}
The only stage of Procedure Approximate that is not performed locally is the stage that collects the information of a cluster of diameter $d$. This requires $(d + 1)$-rounds.
\end{proof}
\begin{lemma} \label{dapr}
Procedure Approximate invoked on a graph $G$ with a $(d,c)$-network-decomposition computes a $c$-approximate Minimum-Coloring of $G$.
\end{lemma}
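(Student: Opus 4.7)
The plan is to establish two things: (1) that $\varphi$ is a legal coloring of $G$, and (2) that the total number of colors it uses is at most $c \cdot \chi(G)$, which immediately yields the $c$-approximation.

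For legality, I would take an arbitrary edge $(u,v) \in E$ and split into two cases according to whether $u$ and $v$ lie in the same cluster. If they share a cluster $W$, then $f(u) = f(v)$, but the exhaustive-search stage guarantees $\varphi_W$ is a legal coloring of $G(W)$, so $\varphi_W(u) \neq \varphi_W(v)$, and hence $\varphi(u) = \varphi_W(u) \cdot c + f(u) - 1 \neq \varphi_W(v) \cdot c + f(v) - 1 = \varphi(v)$. If $u$ and $v$ belong to distinct clusters, then by the definition of a cluster (a maximal connected component of equally-labeled vertices) we must have $f(u) \neq f(v)$. Writing $\varphi(u) - \varphi(v) = c\bigl(\varphi_W(u) - \varphi_{W'}(v)\bigr) + (f(u) - f(v))$, the last term is a nonzero integer of absolute value at most $c-1$, so it cannot be divisible by $c$, and therefore $\varphi(u) \neq \varphi(v)$. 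This case analysis is the central argument and also the only potentially subtle step; the key observation is that encoding the color as the pair $\langle \varphi_W(v), f(v)\rangle$ via the formula $\varphi_W(v)\cdot c + f(v) - 1$ keeps the two coordinates separable modulo $c$.

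For the approximation ratio, I would observe that for every cluster $W$, the induced subgraph $G(W)$ is a subgraph of $G$, hence $\chi(G(W)) \leq \chi(G)$, and the exhaustive search finds an optimal coloring using exactly $\chi(G(W))$ colors. Thus $\varphi_W(v) \in \{1,\dots,\chi(G)\}$ for every $v$, while $f(v) - 1 \in \{0,1,\dots,c-1\}$. Therefore $\varphi(v) \in \{0,1,\dots, c\cdot \chi(G) - 1\}$, so the total number of distinct colors used by $\varphi$ is at most $c \cdot \chi(G)$. Combined with legality, this shows that $\varphi$ is a $c$-approximation of Minimum-Coloring on $G$, completing the proof. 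I do not expect any real obstacle here; the main care is in the case where $u$ and $v$ are in different clusters, where one must explicitly use that $|f(u)-f(v)| < c$ to rule out coincidental equality of the encoded colors.
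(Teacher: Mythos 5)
Your proposal is correct and follows essentially the same route as the paper: the same two-part structure (legality via the two-case analysis on whether $u$ and $v$ share a cluster, then the color-count bound via $\chi(G(W)) \leq \chi(G)$), with your divisibility-by-$c$ phrasing of the different-cluster case being just a cleaner restatement of the paper's inequality $|\varphi_W(u)\cdot c - \varphi_W(v)\cdot c| \geq c > |f(u)-f(v)|$.
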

\begin{proof}
First we prove that Procedure Approximate computes a legal coloring. Let $(u,v)$ be an edge in $E$. We prove that $\varphi(u) \neq \varphi(v)$. If $u$ and $v$ belong to the same cluster $W$, then $f(u) = f(v)$. In this case the exhaustive search computes the coloring $\varphi_W$, such that $\varphi_W(u) \neq \varphi_W(v)$. Therefore,  
$$\varphi(u) = \varphi_W(u) \cdot c + f(u) - 1 \neq \varphi_W(v) \cdot c + f(v) - 1 = \varphi(v).$$   Otherwise, $u$ and $v$ belong to different clusters. Since $u$ and $v$ are neighbors, this implies $f(u) \neq f(v)$. Therefore,  $$\varphi(u) = \varphi_W(u) \cdot c + f(u) - 1 \neq \varphi_W(v) \cdot c + f(v) - 1 = \varphi(v)$$  as well. (If $\varphi_W(u) = \varphi_W(v)$ this is obvious. Otherwise, it holds because  $|\varphi_W(u) \cdot c - \varphi_W(v) \cdot c| \geq c$, and $1 \leq f(u),f(v) \leq c$. Therefore, $$| (\varphi_W(u) \cdot c + f(u) - 1) - (\varphi_W(v) \cdot c + f(v) - 1)| = | \varphi_W(u) \cdot c  - \varphi_W(v) \cdot c + f(u) - f(v)| \geq 1.)$$ 
To summarize, for any pair of neighbors $u,v \in V$, it holds that $\varphi(u) \neq \varphi(v)$. Therefore, $\varphi$ is a legal coloring of $G$.

Next, we prove that the computed coloring is a $c$-approximate Minimum-Coloring of $G$. Observe that for any $W \subseteq V$ it holds that $\chi(G(W)) \leq \chi(G)$. Indeed, a legal coloring of $G$ using $\chi(G)$ colors restricted to $W$ is, in particular, a legal coloring of $G(W)$. Therefore, $G(W)$ can be legally colored with at most $\chi(G)$ colors. Consequently, for each cluster $W \subseteq  V$, and each vertex $v \in W$, it holds that $1 \leq \varphi_W(v) \leq \chi(G(W)) \leq \chi(G)$. Therefore, $\varphi(v) = \varphi_W(v) \cdot c + f(v) - 1 \leq \chi(G) \cdot c + c - 1$. On the other hand, since $\varphi_W(v) \geq 1$ and $f(v) \geq 1$, it holds that $\varphi(v) \geq c$. Therefore, $\varphi$ employs at most $\chi(G) \cdot c + c - 1 - c + 1 = \chi(G) \cdot c$ colors. We remark that all vertices $v \in V$ should subtract $c - 1$ from $\varphi(v)$ to achieve a color in the range $\{1,2,...,\chi(G) \cdot c\}$.
\end{proof}
Linial and Saks \cite{LS93}, devised a randomized algorithm for computing $(O(\log n)$, $O(\log n))$-network-decomposition within $O(\log^2 n)$ rounds. If one is willing to spend that much time, then Lemmas \ref{dtime} - \ref{dapr} in conjunction with the algorithm of Linial and Saks allow computing an $O(\log n \cdot \chi (G))$-coloring of an input graph $G$ within $O(\log^2 n)$ rounds.
\begin{corollary}
It is possible to compute an $O(\log n \cdot \chi (G))$-coloring of a graph $G$ within $O(\log^2 n)$-rounds, with high probability. Hence the above algorithm is an $O(\log n)$-approximation for Minimum-Coloring that requires $O(\log^2 n)$ rounds.
\end{corollary}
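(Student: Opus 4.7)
The plan is to simply compose the Linial--Saks network-decomposition algorithm with Procedure Approximate, and observe that the running times and parameters line up as claimed.

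First, I would invoke the randomized algorithm of Linial and Saks on the input graph $G$. By the cited result \cite{LS93}, after $O(\log^2 n)$ rounds this produces, with high probability, a label assignment $f: V \rightarrow \{1, 2, \ldots, c\}$ with $c = O(\log n)$ such that every cluster has diameter at most $d = O(\log n)$; i.e., a $(d,c)$-network-decomposition with $d, c = O(\log n)$. Let $\mathcal{E}$ denote the event that this decomposition is produced correctly; by assumption $\Pr[\mathcal{E}] \geq 1 - 1/n^k$ for an arbitrarily large constant $k$.

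Next, conditioning on $\mathcal{E}$, I would invoke Procedure Approximate on $G$ with this network-decomposition as input. By Lemma \ref{dtime}, this takes $d + 1 = O(\log n)$ additional rounds. By Lemma \ref{dapr}, the resulting coloring $\varphi$ is a legal coloring of $G$ using at most $c \cdot \chi(G) = O(\log n \cdot \chi(G))$ colors, which is precisely a $c$-approximation with $c = O(\log n)$.

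Combining the two stages, the total number of rounds is $O(\log^2 n) + O(\log n) = O(\log^2 n)$, and the overall algorithm succeeds whenever the network-decomposition stage succeeds, i.e., with high probability. There is essentially no obstacle here: the two ingredients are plug-and-play, and the only thing to verify is the simple arithmetic that $d \cdot c$ rounds do not blow up and that $c \cdot \chi(G)$ matches the claimed $O(\log n \cdot \chi(G))$ bound. The second sentence of the corollary is an immediate restatement since an algorithm that uses at most $c \cdot \chi(G)$ colors is, by definition, a $c$-approximation for Minimum-Coloring.
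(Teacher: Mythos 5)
Your proposal is correct and matches the paper's reasoning exactly: the paper also obtains the corollary by feeding the $(O(\log n), O(\log n))$-network-decomposition of Linial and Saks (computed in $O(\log^2 n)$ rounds) into Procedure Approximate and citing Lemmas \ref{dtime} and \ref{dapr} for the additional $O(\log n)$ rounds and the $O(\log n)$-approximation guarantee. Nothing is missing.
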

In the sequel, we devise coloring algorithms that employ more colors, but require a constant number of rounds.
\section{Computing Network Decompositions with Constant Diameter}
{\bf 4.1 Partitioning Procedure\\}
In this section we devise an algorithm for computing $(O(1),O(n^{1/2 + \epsilon}))$-network-decompositions, for an arbitrarily small constant $\epsilon > 0$. Using this algorithm in conjunction with Lemmas \ref{dtime} - \ref{dapr} we obtain an $O(n^{1/2 + \epsilon})$-approximation algorithm for Minimum-Coloring of $G$. This algorithm terminates within $O(1)$ rounds. The algorithm for computing $(O(1),O(n^{1/2 + \epsilon}))$-network-decompositions is called {\em Procedure Decompose}. The main idea of the algorithm is partitioning the vertex set $V$ into two subsets $A$ and $B$ that satisfy certain helpful properties. Specifically, the induced subgraph $G(A)$ contains a dominating set $D$ of $A$,  such that the size of $D$ is sufficiently small.  The set $D$ consists of $O(n^{1/2})$ vertices. The set $B$, on the other hand, satisfies a different property. Specifically, the maximum degree of $G(B)$ is bounded by $O(n^{1/2} \log n)$. In the sequel we show how to compute network-decompositions of $G(A)$ and $G(B)$, and how to combine them to achieve the desired network-decomposition of $G$. In this section we devise an algorithm for computing $A$ and $B$ within $O(1)$ rounds.

The algorithm for computing a partition of $V$ into two subsets $A$ and $B$ is called {\em Procedure Partition}. Procedure Partition is a randomized algorithm that works in the following way. Each vertex $v \in V$ holds a local Boolean variable $v_m$. We say that a vertex marks itself if it sets $v_m = \mathit{true}$. The vertex $v$ is unmarked if and only if $v_m = \mathit{false}$. Initially, all vertices are unmarked. The steps that each vertex $v \in V$ performs are described below. \\ \\ \\ \\ 

\begin{enumerate}
\item[] {\bf Procedure Partition}
\item $v$ marks iself with probability $1/n^{1/2}$, independently of other vertices.
\item {\bf if} $v$ is marked, {\bf then} it sends a 'marked' message to all its neighbors.
\item {\bf if} $v$ is marked or $v$ has a marked neighbor, {\bf then} $v$ joins the set $A$. \\
  {\bf else} $v$ joins the set $B$.
\end{enumerate}

Step 1 and 3 of of Procedure Partition are performed locally, and step 2 requires one communication round. Therefore, the procedure requires $O(1)$ rounds. Next, we prove that Procedure Partition partitions $V$ into the subsets $A$ and $B$ that satisfy the properties mentioned above.

\begin{lemma}
The set $A$ contains a dominating set $D$ of $A$, with size $|D| = O(n^{1/2})$, with high probability.
\end{lemma}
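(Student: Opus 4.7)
The plan is to let $D$ be exactly the set of vertices marked in Step 1 of Procedure Partition. Both conclusions then follow directly: the dominating property is forced by the construction of $A$, and the size bound is a routine Chernoff calculation on $n$ independent Bernoulli trials with success probability $1/\sqrt n$.

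First I would verify that $D$ is a dominating set of $A$. By Step 3, a vertex $v$ joins $A$ iff it is either marked or has a marked neighbor. In the first case $v \in D$; in the second case $v$ has a neighbor in $D$. Hence every $v \in A$ is dominated by $D$, as required.

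Next I would bound $|D|$. Writing $X_v$ for the indicator that $v$ marks itself, we have $|D| = \sum_{v \in V} X_v$, a sum of $n$ independent Bernoulli$(1/\sqrt n)$ variables with $\Expect[|D|] = \sqrt n$. A standard multiplicative Chernoff bound yields
\[
\Pr\bigl[|D| > 2\sqrt n \bigr] \;\leq\; \exp(-\sqrt n / 3).
\]
Since for any constant $k$ the inequality $\sqrt n/3 \geq k \ln n$ holds for all sufficiently large $n$, we get $\Pr[|D| > 2\sqrt n] \leq 1/n^k$, so $|D| = O(\sqrt n)$ with high probability. Combined with the deterministic dominating-set property above, this gives the lemma.

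The main obstacle is really only cosmetic: one must pick a version of the Chernoff bound whose tail probability is small enough to meet the paper's definition of ``with high probability'' (i.e., $1 - 1/n^k$ for an arbitrarily large constant $k$). The deviation $\mu = \sqrt n$ is comfortably larger than $\log n$, so any of the standard multiplicative forms suffices; no tighter argument is needed.
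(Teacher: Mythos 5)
Your proposal is correct and follows essentially the same route as the paper: take $D$ to be the set of marked vertices, note that Step~3 makes it a dominating set of $A$ by construction, and bound $|D|$ by a multiplicative Chernoff bound on $n$ independent Bernoulli$(1/\sqrt{n})$ trials with mean $\sqrt{n}$. The only (immaterial) difference is the particular form of the Chernoff tail used; both give $\Pr[|D| > 2\sqrt{n}] \leq 1/n^k$ for any constant $k$ and sufficiently large $n$.
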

\begin{proof}
The set $A$ contains all the vertices that are marked during the execution of the algorithm, and all their neighbors. Denote by $D$ the set of vertices that are marked. The set $D$ is a dominating set of $A$. We show that with high probability, $|D| = O(n^{1/2})$. Let $X_v$ denote the random indicator variable, such that $X_v = 1$ if $v$ marks iself, and $X_v = 0$ otherwise. Let $X = \sum_{v \in V} X_v$ be the sum of  $n$ indicator variables. Let $\gamma > 0$ be an arbitrarily small constant. The expected number of marked vertices is $\Expect (X) = n \cdot 1/n^{1/2} = n^{1/2}$. Hence, by the Chernoff bound for upper tails, it holds that 
$$Pr[X > (1 + \gamma)\Expect (X)] \leq \left (\frac{e^{\gamma}}{(1 + \gamma)^{1 + \gamma}} \right )^{\Expect (X)}$$
Set $\gamma = 1$. It holds that $Pr[X > 2 \Expect (X)] \leq (e/4)^{n^{1/2}} \leq 1/n^k$, for an arbitrarily large constant $k$, and sufficiently large $n$.
\end{proof}
\begin{lemma} \label{maxdgr}
The subgraph $G(B)$ induced by $B$ has maximum degree $O(n^{1/2} \log n)$, with high probability.
\end{lemma}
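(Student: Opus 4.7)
The plan is to observe that the only way a vertex $v$ lands in $B$ is if $v$ itself is unmarked \emph{and} every neighbor of $v$ is unmarked. In particular, a vertex whose degree in $G$ is already too large is very unlikely to survive this filter, so any vertex that does end up in $B$ cannot have too many neighbors in $G$ to begin with, and therefore certainly cannot have too many neighbors in $G(B)$.

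More precisely, I would fix an arbitrary constant $k \geq 1$ and pick a constant $c = c(k)$ (any $c \geq k+2$ will do), and call a vertex $v$ \emph{heavy} if $\deg_G(v) \geq c \cdot n^{1/2} \log n$. For every heavy $v$, the event $v \in B$ is contained in the event that all neighbors of $v$ remain unmarked. Since each vertex marks itself independently with probability $1/n^{1/2}$,
\[
\Pr[v \in B] \leq \left(1 - \frac{1}{n^{1/2}}\right)^{\deg_G(v)} \leq \left(1 - \frac{1}{n^{1/2}}\right)^{c \cdot n^{1/2} \log n} \leq e^{-c \log n} = \frac{1}{n^c}.
\]
A union bound over the at most $n$ heavy vertices gives $\Pr[\text{some heavy vertex is in } B] \leq n \cdot n^{-c} = n^{-(c-1)} \leq n^{-k-1}$.

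For every vertex $v \in B$ that is \emph{not} heavy we have the trivial bound $|\Gamma_G(v) \cap B| \leq \deg_G(v) < c \cdot n^{1/2} \log n = O(n^{1/2} \log n)$. Combining the two cases, with probability at least $1 - n^{-k-1} \geq 1 - n^{-k}$ every vertex of $B$ is non-heavy, and hence $\Delta(G(B)) = O(n^{1/2} \log n)$, which is exactly the desired high-probability bound.

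I do not anticipate a real obstacle here: the argument is essentially one Chernoff-style tail estimate plus a union bound, and the exponent $c$ can be chosen arbitrarily large to absorb the $n$ factor from the union bound. The only mild subtlety is to be careful that we are bounding $|\Gamma_G(v) \cap B|$ for $v \in B$ rather than the degree of arbitrary vertices, which is why we split on ``heavy'' versus ``non-heavy'' and use the high-probability non-membership of heavy vertices in $B$ to drop them from consideration.
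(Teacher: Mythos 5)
Your proof is correct and follows essentially the same approach as the paper: both arguments show that any vertex of degree greater than $\Theta(n^{1/2}\log n)$ has a marked neighbor with high probability, hence is forced into $A$, and conclude by a union bound. The only difference is technical: the paper bounds the probability that such a vertex has no marked neighbor via a Chernoff lower-tail estimate on the number of marked neighbors, whereas you compute that probability directly as $(1-n^{-1/2})^{\deg_G(v)} \leq n^{-c}$, which is a slightly more elementary route to the same bound.
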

\begin{proof}
Let $k$ be an arbitrarily large positive constant. Consider a vertex $v \in V$ such that $deg_G(v) > k \cdot n^{1/2} \log n$. Denote $\delta = deg_G(v)$. Let $y_1,y_2,...,y_{\delta}$, be the neighbors of $v$ in $G$. For $i = 1,2,...,\delta$, let $Y_i$ denote the random indicator variable, such that $Y_i = 1$ if $y_i$ marks itself, and $Y_i = 0$ otherwise. Let $Y = \sum_{i \in [\delta]} Y_i$ be the sum of  $\delta$ indicator variables. Let $\gamma > 0$ be an arbitrarily small constant. The expected number of neighbors of $v$ in $G$ that are marked is $\Expect (Y) = \delta \cdot 1/n^{1/2} \geq k \cdot \log n$. Hence, by the Chernoff bound for lower tails, it holds that 
$$Pr[Y < (1 -\gamma)\Expect (Y)] \leq \left ( \frac{e^{\gamma}}{(1 - \gamma)^{1 - \gamma}} \right)^{\Expect (Y)} < e^{  - \Expect (Y) \cdot (\gamma^2/2)}.$$
Set $\gamma = 1/2$. It holds that $$Pr[Y < 1/2 \cdot \Expect (Y)] <  e^{  - \Expect (Y) \cdot (1/8)} \leq e^{  - k \cdot \log n \cdot (1/8)} < 1/n^{k/8}.$$
Therefore, $Pr[Y = 0] < 1/n^{k/8}$ as well. This probability corresponds to the chances of a given vertex with degree larger than $k \cdot n^{1/2} \log n$ to have all its neighbors unmarked. By the union bound, the probability that there exists a vertex $v \in V$ with $deg_G(v) > k \cdot n^{1/2} \log n$, such that all neighbors of $v$ in $G$ are unmarked is at most $\rho = n \cdot 1/n^{k/8} = 1/n^{k/8 - 1}$. Hence, with probability at least  $1 - \rho$, all vertices with degree greater than $k \cdot n^{1/2} \log n$ in $G$ have a marked neighbor, and thus join the set $A$. Therefore, all vertices that join $B$ have degree at most  $k \cdot n^{1/2} \log n = O(n^{1/2} \log n)$, with probability at least $1 -1/n^{k/8 - 1}$. Since $k$ is an arbitrarily large constant, the claim in the lemma follows.
\end{proof}
{\bf 4.2 Network-decompositions in graphs with bounded degree\\}
In this section we device an algorithm that allows computing an $(O(1), n^{1/2 + \epsilon})$-network-decomposition of $B$. (We postpone the description of the algorithm for $A$ to Section 4.3.) The algorithm we devise to be used for $B$ is quite general. It computes a legal coloring of the underlying graph, rather than a network-decomposition. However, a legal coloring using $\ell$ colors is, in particular, an $(O(1), \ell)$-network-decomposition. Moreover, our algorithm colors any graph with maximum degree $\Delta \geq n^{\mu}$, using $\Delta^{1 + \epsilon}$ colors, for arbitrarily small constants $\epsilon, \mu > 0$. For graphs with maximum degree smaller than $n^{\mu}$, our algorithm produces an $O(n^{\mu + \epsilon \cdot \mu})$-coloring. Observe that applying this algorithm on $G(B)$ results in an $O((n^{1/2} \log n)^{1 + \epsilon})$-coloring of $G(B)$, which is an $O(n^{1/2 + \epsilon})$-coloring.

The algorithm is called {\em Procedure Color}. It accepts as input a parameter $\Delta$ which is an upper bound of the maximum degree of the underlying graph, such that $\Delta \geq n^{\mu}$. The procedure performs a constant number of rounds. Each round consists of the following steps.

\begin{enumerate}
\item[] {\bf Procedure Color}
\item $v$ draws uniformly at random a color $q_v$ from the range $\{1,2,...,\left \lceil \Delta^{1 + \epsilon} \right \rceil \}$, and sends $q_v$ to all neighbors.
\item {\bf if} $q_v$ is different from the colors of all neighbors of $v$ (including those that have already terminated) \\ {\bf then} $v$ sets $q_v$ as its final color, informs its neighbors, and terminates.
\item {\bf else} $v$ discards the color $q_v$.
\end{enumerate}
Observe that once the procedure terminates in all vertices, each vertex $v$ holds a color $q_v$ that is different from the colors of all its neighbors. To prove this, let $i$ denote the round in which a vertex $v$ has terminated. All neighbors $u$ of $v$ that have terminated before $v$, have selected a final color $q_u$ before round $i$. Thus, these colors do not change in round $i$ and afterwards. Therefore, $q_v \neq q_u$ for each neighbor $u$ of $v$ that has terminated before $v$. (Otherwise, $v$ would not terminate in round $i$.) By the same argument, we conclude that for each neighbor $u$ that terminates after $v$, it holds that $q_u \neq q_v$. It is left to show that for all neighbors $w$ of $v$ that terminates in round $i$, it holds that $q_v \neq q_w$. Assume for contradiction that the colors that $v$ and $w$ select in round $i$ are identical. Then nor $v$ nor $w$ terminate in round $i$. This is a contradiction. Therefore, if all vertices terminate, the produced coloring is legal.

Next, we analyze the performance of Procedure Color in case that it is executed for a single round. (Not necessarily the first one.)
\begin{lemma} \label{colr}
Suppose that Procedure Color is executed in round $i$, for $i \geq 1$, by a vertex $v$. The probability that $v$ does not terminate in round $i$ is at most $1/\Delta^{\epsilon}$.
\end{lemma}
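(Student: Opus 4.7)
The plan is to identify the bad event (that $v$ fails to terminate in round $i$) as a collision event between $q_v$ and the colors currently held by $v$'s neighbors, and then dominate it by a union bound over at most $\Delta$ neighbors.

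First I would fix attention on a vertex $v$ that is still active at the beginning of round $i$, and enumerate the current colors of its neighbors. Each neighbor $u$ of $v$ falls into one of two categories: either $u$ has already terminated in an earlier round and holds a fixed final color $q_u$, or $u$ is still active and draws its color $q_u$ uniformly from $\{1,2,\dots,\lceil \Delta^{1+\epsilon} \rceil\}$ in round $i$, independently of every other vertex (and in particular independently of $v$'s draw). In either case, conditioned on whatever has happened before round $i$, a specific value of $q_u$ (or its distribution) is determined, and $q_v$ is drawn uniformly and independently from the same range.

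Next I would bound the probability of a single collision. For any fixed neighbor $u$, the probability that $q_v = q_u$ is exactly $1/\lceil \Delta^{1+\epsilon}\rceil \leq 1/\Delta^{1+\epsilon}$, regardless of whether $q_u$ is already fixed or freshly random, because $q_v$ is uniform and independent of $q_u$. Since $v$ fails to terminate in round $i$ precisely when there exists at least one neighbor $u$ with $q_v = q_u$, the union bound over the at most $\Delta$ neighbors of $v$ gives
\[
\Pr[v \text{ does not terminate in round } i] \;\leq\; \Delta \cdot \frac{1}{\Delta^{1+\epsilon}} \;=\; \frac{1}{\Delta^{\epsilon}}.
\]

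I do not expect any real obstacle here; the only subtle point to state carefully is the independence assertion, namely that $q_v$ is drawn independently of the random choices (or previously fixed choices) of its neighbors in round $i$, so that the elementary $1/\lceil \Delta^{1+\epsilon}\rceil$ collision bound applies uniformly to all neighbors. Once that is stated, the union bound finishes the proof.
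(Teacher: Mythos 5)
Your proof is correct and follows essentially the same route as the paper: the paper phrases it as ``assume $v$ selects its color last, so at most $\Delta$ colors are forbidden, giving probability at most $\Delta/\Delta^{1+\epsilon}$,'' which is the same independence-plus-union-bound calculation you carry out neighbor by neighbor. No gap.
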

\begin{proof}
Assume without loss of generality that in round $i$ the vertex $v$ selects a color after all its neighbors do so.
The number of different colors selected by all neighbors of $v$ is at most $\Delta$. The vertex $v$ selects a color from the range $\{1,2,...,\left \lceil \Delta^{1 + \epsilon} \right \rceil \}$. Therefore, the probability that it selects a color that is identical to a color of a neighbor is at most $\Delta/\Delta^{1 + \epsilon} = 1/\Delta^{\epsilon}$.
\end{proof}
Next, we analyze the probability that a vertex does not terminate within $i$ rounds, for an integer constant $i > 0$.
The probability that a vertex does not terminate in the first round is at most $1/\Delta^{\epsilon}$, by Lemma \ref{colr}.
 The probability that a vertex does not terminates in round $i$, conditioned on that it does not terminate within rounds $1,2,...,i-1$, is at most $1/\Delta^{\epsilon}$ as well. Therefore, the probability that a vertex does not terminate within $i$ rounds is $\hat{\rho} = (1/\Delta^{\epsilon})^i$. For an arbitrarily large constant $k$, set $i = \left \lceil k/(\mu \cdot \epsilon) \right \rceil$. It holds that $\hat{\rho} = (1/\Delta^{\epsilon})^i \leq (1/n^{\mu \cdot \epsilon})^i \leq 1/n^k$. By the union bound, the probability that there exists a vertex that does not terminate is at most $n \cdot 1/n^k = 1/n^{k - 1}$. Therefore, all vertices terminate with high probability. We summarize this in the following theorem.
 
\begin{theorem}
After a constant number of rounds, Procedure Color computes a legal $\Delta^{1 + \epsilon}$-coloring of an input graph of maximum degree at most $\Delta \geq n^{\mu}$, for arbitrarily small constants $\epsilon, \mu > 0$, with high probability.
\end{theorem}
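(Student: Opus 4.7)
The plan is to combine the per-round success bound with a conditioning argument across rounds, and then a union bound over all vertices.

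First, I would recall that legality of the produced coloring was already established in the paragraph preceding Lemma \ref{colr}: whenever a vertex finalizes its color $q_v$, that color differs from every color already finalized by a neighbor, and two neighbors cannot simultaneously finalize the same color in the same round (otherwise neither would satisfy the termination condition). Hence it suffices to argue that, with high probability, every vertex terminates within a constant number of rounds.

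Next I would bound, for a fixed vertex $v$ and a fixed round $i$, the conditional probability that $v$ fails to terminate in round $i$ given any history of draws in rounds $1,\dots,i-1$. Conditioned on that history, the already-terminated neighbors of $v$ hold fixed colors, and the still-active neighbors draw their round-$i$ colors independently of $v$'s round-$i$ draw. The union of these colors contains at most $\Delta$ distinct values, whereas $v$'s color is drawn uniformly from $\{1,\dots,\lceil\Delta^{1+\epsilon}\rceil\}$. Thus the conditional collision probability is at most $\Delta/\Delta^{1+\epsilon}=1/\Delta^{\epsilon}$, exactly as in Lemma \ref{colr}. Iterating this bound over $i$ rounds gives that the probability $v$ has not terminated after $i$ rounds is at most $(1/\Delta^{\epsilon})^{i}$.

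Using the hypothesis $\Delta\ge n^{\mu}$, this is at most $n^{-\mu\epsilon i}$. For an arbitrarily large constant $k$, choosing $i=\lceil (k+1)/(\mu\epsilon)\rceil$, which is a constant since $\mu$ and $\epsilon$ are constants, makes the per-vertex failure probability at most $n^{-(k+1)}$. A union bound over the $n$ vertices then yields that every vertex terminates within $i$ rounds with probability at least $1-n^{-k}$, so the algorithm produces a legal coloring using at most $\lceil\Delta^{1+\epsilon}\rceil$ colors in $O(1)$ rounds with high probability.

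The calculation is entirely routine; the only subtle point is handling dependence between rounds, which I would address by emphasizing that the bound $1/\Delta^{\epsilon}$ in Lemma \ref{colr} holds conditionally on any outcome of earlier rounds (since $v$'s fresh uniform draw in round $i$ is independent of all prior randomness), so the naive product bound across rounds is in fact valid.
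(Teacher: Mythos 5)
Your proposal is correct and follows essentially the same route as the paper: legality from the pre-lemma discussion, the per-round bound of $1/\Delta^{\epsilon}$ from Lemma \ref{colr}, the product bound across a constant number $i = \Theta(k/(\mu\epsilon))$ of rounds using $\Delta \ge n^{\mu}$, and a final union bound over the $n$ vertices. Your explicit justification of why the conditional per-round bound composes (the fresh uniform draw being independent of prior history) is a welcome elaboration of a step the paper only asserts.
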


\begin{corollary} \label{degr}
For a graph with maximum degree $O(n^{1/2} \log n)$, an $(O(1), n^{1/2 + \epsilon})$-network-decomposition can be computed in $O(1)$ rounds, with high probability. \\
\end{corollary}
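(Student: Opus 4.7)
The plan is to derive the corollary directly from the preceding theorem by instantiating Procedure Color with the right bound on $\Delta$ and then observing that any legal coloring is trivially a network-decomposition with singleton clusters.

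First I would let $G$ be an input graph with maximum degree bounded by $c \cdot n^{1/2} \log n$ for some constant $c$, and invoke Procedure Color on $G$ with the input parameter $\Delta := \lceil c \cdot n^{1/2} \log n \rceil$. Since $\Delta \geq n^{1/2}$ for sufficiently large $n$, the lower-bound condition $\Delta \geq n^\mu$ of the preceding theorem is met with $\mu = 1/2$, so the theorem applies: for any constant $\epsilon' > 0$, after $O(1)$ rounds Procedure Color produces, with high probability, a legal $\Delta^{1+\epsilon'}$-coloring of $G$.

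Next I would convert the coloring bound into the required form. We have
\[
\Delta^{1+\epsilon'} \;=\; \bigl(c \cdot n^{1/2}\log n\bigr)^{1+\epsilon'} \;=\; c^{1+\epsilon'} \cdot n^{(1+\epsilon')/2} \cdot (\log n)^{1+\epsilon'}.
\]
Given the target constant $\epsilon > 0$ in the corollary, I would choose $\epsilon' := \epsilon/2$. Then $n^{(1+\epsilon')/2} = n^{1/2 + \epsilon/4}$, and since $(\log n)^{1+\epsilon'} = o(n^{\epsilon/4})$, the whole expression is $O(n^{1/2+\epsilon})$. Thus the coloring produced by Procedure Color uses $O(n^{1/2+\epsilon})$ colors.

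Finally I would observe that a legal coloring with $\ell$ colors is in particular an $(O(1), \ell)$-network-decomposition: take the label assignment $f(v)$ to be the color of $v$, and note that since no two adjacent vertices share a color, every cluster (connected component of same-labeled vertices) is a single vertex, hence has diameter $0 = O(1)$. Applying this with $\ell = O(n^{1/2+\epsilon})$ yields the claimed $(O(1), n^{1/2+\epsilon})$-network-decomposition, computed in $O(1)$ rounds with high probability. There is no real obstacle here; the only thing to be careful about is absorbing the polylogarithmic factor into $n^{\epsilon/4}$ and choosing the two $\epsilon$'s in the right order, which is straightforward.
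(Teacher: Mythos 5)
Your proposal is correct and matches the paper's (implicit) argument: the paper likewise obtains the corollary by running Procedure Color with $\Delta = O(n^{1/2}\log n) \geq n^{\mu}$, observing that the resulting $O((n^{1/2}\log n)^{1+\epsilon'})$-coloring is $O(n^{1/2+\epsilon})$ after absorbing the polylogarithmic factor, and using the fact that a legal $\ell$-coloring is an $(O(1),\ell)$-network-decomposition with singleton clusters. Your explicit choice $\epsilon' = \epsilon/2$ and the bookkeeping of the two constants is exactly the detail the paper leaves to the reader.
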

{\bf 4.3 Network-decompositions in graphs with small dominating set \\}
In this section we devise an algorithm for computing $(O(1), n^{1/2 + \epsilon})$-network-decomposition for graph that contain a dominating set of size $O(n^{1/2})$. We remark that it is possible to obtain a more general variant of the algorithm, that computes $(O(1),n^{\mu + \epsilon})$-network-decomposition for graph with dominating sets of size $O(n^{\mu})$, for a constant $0 < \mu < 1$, and an arbitrarily small constant $\epsilon > 0$. For clarity, and because it is sufficient for our goals, we present here only the algorithm for graphs with dominating sets of size  $O(n^{1/2})$.
This algorithm can be used for $A$. The algorithm is called {\em Procedure Dominate}. It accepts as input a dominating set $D$ of the underlying graph $G' = (V',E')$, such that $D$ contains $O(n^{1/2})$ vertices. Our ultimate goal would be assigning unique labels from the range $\{1,2,...,O(n^{1/2})\}$, to the vertices of $D$. If we would be able to do so, then each vertex in $V' \setminus D$ could select a label of (an arbitrary) neighbor that belongs to $D$. Consequently, the diameter of each cluster would be at most $2$. Indeed, all vertices with the same label in $V' \setminus D$ are connected to a common vertex in $D$, and all vertices in $D$ have distinct labels. 

It is currently unknown whether it is possible to compute labels for vertices in $D$ as described above, within a constant number of rounds. We address a problem with somewhat weaker requirements. This is, however, sufficient for computing the desired network-decomposition. Specifically, we require that the labels assigned to vertices of $D$ are taken from the range $\{1,2,...,O(n^{1/2 + \epsilon})\}$. Also, we do not require that all the labels are unique. Instead, we require that each vertex $v \in D$ selects a label that is distinct from the labels of vertices in $\Gamma_{G'}^3(v) \cap D$. In particular, such a labeling constitutes a distance-3 coloring of $D$. This way, once vertices from $D$ select appropriate labels, and vertices from $V' \setminus D$ select a label of an arbitrary neighbor from $D$, we obtain an $(O(1),n^{1/2 + \epsilon})$-network-decomposition. We will prove this claim shortly, but first we describe the algorithm in a more detail. 
In each iteration, each vertex $v \in D$ performs the following steps. These steps are performed for $k$ iterations, where $k > 0$ is an integer constant to be determined later. The vertices of $V' \setminus D$ do not perform any steps in these $k$ iterations.

\begin{enumerate}
\item[] {\bf Procedure Dominate}
\item[] (Performed by vertices $v \in D$)
\item $v$ draws uniformly at random a label $l_v$ from the set $\{1,2,...,\left \lfloor n^{1/2 + \epsilon} \right \rfloor \}$.
\item $v$ collects the topology of $\Gamma_{G'}^3(v)$, including labels.
\item {\bf if} $l_v$ is distinct from all labels in $\Gamma_{G'}^3(v) \cap D$ \\ {\bf then} 
$v$ sets $l_v$ as its final label, informs its neighbors, and terminates.
\item {\bf else} $v$ discards the label $l_v$.
\end{enumerate}
Observe that each iteration of Procedure Dominate requires four rounds. Three rounds are required for collecting the topology of $\Gamma_{G'}^3(v)$, and one round is required for informing the neighbors about a final selection of $l_v$. We will prove that after $k$ iterations, for a sufficiently large constant $k$, all vertices $v \in D$ terminate, with high probability. In iteration $k + 1$, all vertices $u \in V' \setminus D$ select a final label $l_u$, such that $l_u = l_w$, for an arbitrary $w \in \Gamma_{G'}(v) \cap D$. This completes the description of Procedure Dominate. We analyze its correctness and running time below.

\begin{lemma}
After a constant number of iterations of executing Procedure Dominate, all vertices $v \in D$ terminate, with high probability.
\end{lemma}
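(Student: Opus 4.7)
The plan is to show that in each iteration, the conditional probability that a given vertex $v\in D$ fails to terminate is at most $O(n^{-\epsilon})$, and then to multiply across a constant number of iterations and take a union bound over $D$.

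First I would observe the simple cardinality bound $|\Gamma_{G'}^3(v)\cap D|\le |D|=O(n^{1/2})$. This is the only structural fact needed, and it is exactly the point at which the small size of the dominating set enters the argument.

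Next I would analyze one iteration of Procedure Dominate. Fix any history $\mathcal{H}$ of the preceding iterations. Conditioned on $\mathcal{H}$, some vertices in $\Gamma_{G'}^3(v)\cap D$ have already terminated and hold fixed final labels; the remaining non-terminated vertices in $D$ (including $v$) draw fresh labels independently and uniformly from $\{1,2,\dots,\lfloor n^{1/2+\epsilon}\rfloor\}$. For any single other vertex $u\in\Gamma_{G'}^3(v)\cap D$, the probability that $l_v=l_u$ is at most $2/n^{1/2+\epsilon}$ for large $n$ (whether $u$'s label is fixed or freshly drawn, $v$'s fresh uniform draw collides with any specific value with that probability). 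A union bound over the at most $O(n^{1/2})$ such vertices gives
\[
\Pr[\,v\text{ does not terminate in this iteration}\mid\mathcal{H}\,]\le \frac{|D|}{\lfloor n^{1/2+\epsilon}\rfloor}=O(n^{-\epsilon}).
\]
Because this bound is uniform in the history, I can chain iterations: the probability that $v$ fails to terminate in every one of $k$ consecutive iterations is at most $\bigl(O(n^{-\epsilon})\bigr)^{k}=O(n^{-\epsilon k})$.

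Finally I would take a union bound over all vertices of $D$, giving total failure probability $O(n^{1/2}\cdot n^{-\epsilon k})=O(n^{1/2-\epsilon k})$. For any desired constant $c\ge 1$, choosing the integer constant $k\ge \lceil(c+1/2)/\epsilon\rceil$ makes this at most $1/n^c$, which proves the high-probability claim. The only subtle point (and the one I expect to need the most care in writing out) is the conditioning argument in the per-iteration bound: one has to verify that the bound $O(n^{-\epsilon})$ on $v$'s failure probability genuinely holds conditionally on the entire history of labels chosen by vertices in $D$, so that the iterations may be multiplied without requiring mutual independence of the per-iteration failure events. The uniform fresh draws at each iteration are what make this work.
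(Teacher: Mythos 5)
Your proposal is correct and follows essentially the same route as the paper: bound $|\Gamma_{G'}^3(v)\cap D|$ by $|D|=O(n^{1/2})$, conclude a per-iteration failure probability of $O(n^{-\epsilon})$, multiply over a constant number of iterations, and union-bound over the vertices of $D$. The only difference is that you spell out the conditioning argument needed to justify multiplying the per-iteration failure probabilities, a point the paper's proof passes over silently; this is a welcome refinement rather than a different approach.
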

\begin{proof}
Observe that for each $v \in D$, the number of vertices in $\Gamma_{G'}^3(v) \cap D$ is $O(n^{1/2})$, since $|D| = O(n^{1/2})$. Therefore, the probability that a vertex does not terminate after a single iteration is $O(n^{1/2})/n^{1/2 + \epsilon} = 1/\Omega(n^{\epsilon})$. The probability that a vertex does not terminate after $k$ iterations is $1/\Omega(n^{\epsilon \cdot k})$. For an arbitrarily large constant $k'$, there exist a sufficiently large constant $k$, such that $1/\Omega(n^{\epsilon \cdot k}) < 1/n^{k'}$. Therefore, a vertex terminates after a constant number of iterations, with high probability. By the union bound, all vertices in $D$ terminate within $k'$ iterations, with probability at least $1 - 1/n^{k' - 1}$.
\end{proof}
\begin{lemma} \label{domint}
Procedure Dominate computes an $(O(1),O(n^{1/2 + \epsilon}))$-network- decomposition, with high probability.
\end{lemma}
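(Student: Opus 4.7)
The plan is to establish two things: (i) the total number of labels used is $O(n^{1/2+\epsilon})$, and (ii) every cluster has diameter at most $2$ (hence $O(1)$). Point (i) is essentially immediate from the construction: vertices in $D$ draw labels from $\{1, \ldots, \lfloor n^{1/2+\epsilon} \rfloor\}$, and vertices in $V' \setminus D$ copy a label from an arbitrary $D$-neighbor. By the preceding lemma, all vertices of $D$ obtain a final label with high probability, and since $D$ dominates $V'$, every vertex of $V' \setminus D$ has at least one $D$-neighbor to copy from, so the label assignment is well defined on all of $V'$.

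For (ii) I will exploit the key structural property delivered by Procedure Dominate: the labeling on $D$ is a \emph{distance-$3$ coloring}, meaning any two distinct vertices of $D$ sharing a label lie at distance strictly greater than $3$ in $G'$. For each non-$D$ vertex $w$ with label $\ell$, let $\pi(w) \in D$ denote the neighbor whose label $w$ adopted, and set $\pi(u) = u$ for $u \in D$. A first observation is that $\pi(w)$ is uniquely determined by $w$: two $D$-neighbors of $w$ sharing label $\ell$ would be at distance $2$ in $G'$ via $w$, contradicting the distance-$3$ property.

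The core step is then to show that each cluster contains exactly one vertex of $D$. Suppose for contradiction that a cluster $C$ with label $\ell$ contains two distinct $D$-vertices $u, u'$. Since $C$ is connected in the subgraph induced by $\ell$-labeled vertices, pick a path $u = v_0, v_1, \ldots, v_k = u'$ inside $C$. For consecutive $v_i, v_{i+1}$, both $\pi(v_i)$ and $\pi(v_{i+1})$ lie in $D$ with label $\ell$ and are joined by the walk $\pi(v_i) \to v_i \to v_{i+1} \to \pi(v_{i+1})$ of length at most $3$; the distance-$3$ property then forces $\pi(v_i) = \pi(v_{i+1})$. Propagating along the path yields $u = \pi(u) = \pi(u') = u'$, a contradiction. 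Consequently the cluster containing $u \in D$ is precisely $u$ together with its $\ell$-labeled neighbors in $V' \setminus D$, so its diameter is at most $2$.

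The main subtlety is the propagation argument about $\pi$; once the uniqueness of the source is established, the remaining structural claim and the high-probability guarantee follow routinely from the preceding lemma.
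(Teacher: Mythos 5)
Your proof is correct, but it takes a genuinely different route from the paper's. The paper also reduces everything to the distance-$3$ coloring property of the labels on $D$, but it argues directly about a shortest monochromatic path $P=\{w_1,\ldots,w_\ell\}$ between two vertices of a cluster, assumes $\ell>3$, and derives a contradiction through a three-way case analysis on the initial segment of $P$ (whether $w_1\in D$, $w_2\in D$, or $w_1,w_2$ share a common dominator $y$). You instead introduce the source map $\pi$ sending each vertex to the dominator it copied its label from, observe that two consecutive cluster vertices have sources at distance at most $3$ and hence equal, and propagate this along any path to conclude that each cluster has a unique dominator and is contained in that dominator's closed neighborhood. Your argument buys a cleaner, case-free proof and a slightly stronger structural conclusion (each cluster is a star centered at its unique $D$-vertex, up to edges among leaves), while the paper's version works directly with the shortest-path definition of diameter without introducing auxiliary machinery. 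Both approaches handle the label count and the high-probability guarantee identically, by appealing to the label range and the preceding termination lemma.
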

\begin{proof}
First observe that all labels are taken from the range $\{1,2,...,\left \lfloor n^{1/2 + \epsilon} \right \rfloor\}$. Next, we show that if all vertices of $D$ terminate, then all clusters have diameter at most $2$. Consider a pair of vertices $u$ and $v$ that belong to the same cluster. There exist a path that connects $u$ and $v$ in which all vertices have the same labels $l_u = l_v$. Let $P = \{w_1,w_2,...,w_{\ell}\}$ be the shortest path among all such paths. It holds that $w_1 = u$ and $w_{\ell} = v$. The length of $P$ is $\ell$. Suppose for contradiction that $\ell > 3$. One of the following three possibilities hold: {\bf (a)} $w_1 \in D$, {\bf (b)} $w_2 \in D$, or {\bf (c)} there is a common neighbor $y \in D$ of $w_1$ and $w_2$. (Otherwise, $w_1$ and $w_2$ do not have common neighbors in $D$. Therefore, they select labels of distinct vertices in $D$ that are at distance at most 3 each from another. Thus, $w_1$ and $w_2$ cannot have the same labels.)\\
{\bf (a)} If  $w_1 \in D$, then we consider the vertex $w_3$. Obviously, $w_3 \notin D$, since otherwise $w_3 \in \Gamma_{G'}^3(w_1) \cap D$, and $w_3$ cannot have the same label as that of $w_1$. Therefore, $w_3$ has a neighbor $z$ that belongs to $D$ and has a label identical to the label of $w_3$. (But $z \neq w_1$ since $P$ is a shortest path.) It holds that $l_z = l_{w_3} = l_{w_1}$. On the other hand, $z \in \Gamma_{G'}^3(w_1) \cap D$ (because $w_3$ is at distance $2$ from $w_1$, and $z$ is a neighbor of $w_3$). Therefore, $l_{w_1}$ cannot be identical to $l_z$. This is a contradiction. \\
{\bf (b)} If  $w_2 \in D$, then we consider the vertex $w_4$. Obviously, $w_4 \notin D$. Therefore, $w_4$ has a neighbor $z$ that belongs to $D$ and has a label identical to the label of $w_4$. It holds that $l_z = l_{w_2}$. On the other hand, $z \in \Gamma_{G'}^3(w_2) \cap D$. Therefore, $l_z$ cannot be identical to $l_{w_2}$. This is a contradiction. \\
{\bf (c)} If there is a common neighbor $y \in D$ of $w_1$ and $w_2$, then $w_1$ and $w_2$ must select a label of a common neighbor. (Otherwise, they would select labels of distinct vertices that are at distance at most 3 each from another, and, therefore, their labels would be distinct.) Assume without loss of generality that this common neighbor is $y$. It holds that $(y,w4) \notin E'$. (Otherwise, $P$ would not be a shortest path.) The vertex $w_3$ may or may not be connected to $y$. Let $w_i \in \{w_3,w_4\}$ be the vertex with the smallest index $i$ such that $(y,w_i) \notin E'$. The distance between $y$ and $w_i$ is $2$. Thus $w_i \notin D$, but has a neighbor $z \in D$, such that $l_y = l_{w_i} = l_z$. But $y \neq z$, and $y,z$ are at distance at most $3$ each from another, thus cannot have the same label. This is a contradiction.\\
Therefore, the assumption that $\ell > 3$ leads to contradictions in all possible cases. Hence $\ell \leq 3$. Therefore, all clusters have diameter at most $2$ as required.
\end{proof}
Using Lemma \ref{domint} we can compute an $(O(1),O(n^{1/2 + \epsilon}))$-network-decomposition of $G(A)$. Using Corollary \ref{degr} we can compute an $(O(1),O(n^{1/2 + \epsilon}))$-network-decomposition of $G(B)$. It is left to show how to combine these two network-decompositions to obtain a unified network-decomposition of $G$. To this end, each vertex $u \in A$, with a label $l_u$, computes a new label $l'_u = l_u \cdot 2$. Each vertex $v \in B$, with a label $l_v$, computes a new label $l'_v = l_v \cdot 2 + 1$. Consequently, for each $v \in A, u \in B$, it holds that $l'_v \neq l'_u$. Consider a cluster in $G$ with respect to the new labeling. All the vertices in the cluster have the same new label. Hence all of them have the same old label, as well. Therefore, either all of them belong to $A$, or all of them belong to $B$. Therefore, the diameter of the cluster is $O(1)$. Since the number of labels in $A$ and in $B$ is $O(n^{1/2 + \epsilon})$ the total number of new labels is $O(n^{1/2 + \epsilon})$ as well.

Finally, observe that all the procedures can be combined to produce an $O(n^{1/2 + \epsilon} \cdot \chi(G))$-coloring of $G$ from scratch. To this end, the vertices first compute the value $t = \left \lfloor k \cdot n^{1/2} \cdot \log n \right \rfloor$, where $k$ is the constant hidden in the $O$-notation in Lemma \ref{maxdgr}. (Recall that all vertices know $n$.) Then, they invoke Procedure Partition. Consequently, each vertex knows whether it belongs to $A$ or to $B$. Moreover, the vertices in $A$ know whether they belong to the dominating set $D$ or not. (The vertices that belong to $D$ are marked.) Next, the vertices of $A$ execute Procedure Dominate. The vertices of $B$ execute Procedure Color with the value $t$ as input. Consequently, the desired network-decompositions of $G(A)$ and $G(B)$ are computed. Then they are combined into a unified $(O(1),O(n^{1/2 + \epsilon})$-decomposition of $G$. Next, an $O(n^{1/2 + \epsilon} \cdot \chi(G))$-coloring is computed using Procedure Approximate in $O(1)$ rounds. (See Lemmas \ref{dtime} -\ref{dapr}.) The input for Procedure Approximate, $t' = O(n^{1/2 + \epsilon})$, can be computed locally by each vertex. We summarize this discussion in the following theorem.
\begin{theorem}
For any graph $G$ with $n$ vertices, and an arbitrarily small positive constant $\epsilon >0 $, with high probability, we can compute within $O(1)$ rounds: \\
{\bf (1)} An $(O(1),O(n^{1/2 + \epsilon})$-network-decomposition of $G$. \\
{\bf (2)} An $O(n^{1/2 + \epsilon} \cdot \chi(G))$-coloring of $G$.
\end{theorem}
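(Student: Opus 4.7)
The plan is to assemble the theorem by chaining together the procedures developed in Sections 4.1--4.3 and Section 3, and verifying that the ``with high probability'' guarantees compose properly via a single union bound. First, I would invoke Procedure Partition, which in $O(1)$ rounds produces the partition $V = A \cup B$ together with the marked set $D \subseteq A$. By the two lemmas of Section 4.1, with high probability $D$ is a dominating set of $A$ with $|D| = O(n^{1/2})$, and the maximum degree of $G(B)$ is $O(n^{1/2} \log n)$. I would condition the remainder of the analysis on both of these events.

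Next, I would run Procedure Dominate on $G(A)$ using the dominating set $D$, and simultaneously run Procedure Color on $G(B)$ with the parameter $t = \lfloor k \cdot n^{1/2} \log n \rfloor$ (the constant $k$ being the one hidden in Lemma \ref{maxdgr}). Every vertex can compute $t$ locally from $n$, so no extra rounds are needed. By Lemma \ref{domint} the output on $A$ is an $(O(1), O(n^{1/2 + \epsilon}))$-network-decomposition of $G(A)$ w.h.p., and by Corollary \ref{degr} the output on $B$ is an $(O(1), O(n^{1/2 + \epsilon}))$-network-decomposition of $G(B)$ w.h.p. Both procedures run in $O(1)$ rounds, and the two randomized executions are on disjoint vertex sets, so they can be carried out in parallel without interference.

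The key combining step is the label transformation described just before the theorem: set $l'_u = 2 l_u$ for $u \in A$ and $l'_v = 2 l_v + 1$ for $v \in B$. Since $A$-labels are always even and $B$-labels are always odd, no edge crossing the $A$--$B$ cut can join two vertices with the same new label; hence every cluster in the relabelled $G$ is entirely contained in $A$ or entirely in $B$, and therefore coincides with a cluster of one of the two sub-decompositions. This immediately yields cluster diameter $O(1)$ and total label count $O(n^{1/2+\epsilon})$, which establishes part (1). For part (2), I would feed the resulting decomposition into Procedure Approximate; by Lemmas \ref{dtime}--\ref{dapr} this runs in $O(1)$ additional rounds and produces a legal coloring with at most $c \cdot \chi(G) = O(n^{1/2+\epsilon}) \cdot \chi(G)$ colors.

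The place that requires the most care is not any single step but the composition: I would take a union bound over the failure events of Procedure Partition (two events), Procedure Dominate, and Procedure Color. Each of these has failure probability $1/n^{k-1}$ or smaller for an arbitrarily large constant $k$, so summing four such terms still gives failure probability $1/n^{\Omega(1)}$, which can be made $1/n^{k'}$ for any target constant $k'$ by choosing the internal constants large enough. The only subtlety is ensuring that the high-probability guarantee of Corollary \ref{degr} is applied with the \emph{correct} degree bound: since Procedure Color receives $t = \Theta(n^{1/2}\log n) \geq n^{\mu}$ as its degree parameter, the $\mu$ needed by Procedure Color exists, and the resulting palette size $t^{1+\epsilon} = O(n^{1/2+\epsilon'})$ for a suitable $\epsilon'$ depending on $\epsilon$; by shrinking the original $\epsilon$ slightly at the start we absorb the $\log n$ factor into $n^{\epsilon}$, which is what allows the final bound to be stated cleanly as $O(n^{1/2+\epsilon})$.
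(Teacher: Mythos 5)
Your proposal is correct and follows essentially the same route as the paper: Procedure Partition, then Procedure Dominate on $G(A)$ and Procedure Color on $G(B)$ with parameter $t = \lfloor k n^{1/2}\log n\rfloor$, the even/odd label merge, and finally Procedure Approximate, with a union bound over the constituent failure events. Your added remarks on composing the high-probability guarantees and on absorbing the $\log n$ factor by slightly shrinking $\epsilon$ only make explicit what the paper leaves implicit.
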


\noindent {\large {\bf Acknowledgements}}\\
The author is grateful to Michael Elkin for fruitful discussions and very helpful remarks.

\clearpage
\appendix
\centerline{\LARGE\bf Appendix}


\section{Collecting neighborhood topology}
\APPg
\end{document}